\DeclareMathOperator{\argmin}{arg\,min}
\newtheorem{example}{Example}
\newtheorem{assumption}{Assumption}
\newtheorem{problem}{Problem}
\newtheorem{proposition}{Proposition}
\title{\LARGE \bf
Continuous-Time Distributed Dynamic Programming for Networked Multi-Agent Markov Decision Processes
}
\author{Donghwan Lee, Han-Dong Lim, and Do Wan Kim
\thanks{D. Lee and H. Lim are with the Department of Electrical and Engineering, Korea Advanced Institute of Science and Technology (KAIST), Daejeon, 34141, South Korea {\tt\small
donghwan@kaist.ac.kr, limaries30@kaist.ac.kr}.}%}
\thanks{D. Kim is with the Department of Electrical Engineering,
Hanbat National University, Daejeon 34158, South Korea {\tt\small
dowankim@hanbat.ac.kr}.}
\thanks{$^*$ This work was supported by National Research Foundation under Grant NRF-2021R1I1A3058581 and Institute of Information communications Technology Planning Evaluation (IITP) grant funded by the Korea government (MSIT)(No.2022-0-00469)}
}
\begin{document}

\maketitle
\thispagestyle{empty}
\pagestyle{empty}

%%%%%%%%%%%%%%%%%%%%%%%%%%%%%%%%%%%%%%%%%%%%%%%%%%%%%%%%%%%%%%%%%%%%%%%%%%%%%%%%
\begin{abstract}
The main goal of this paper is to investigate continuous-time distributed dynamic programming (DP) algorithms for networked multi-agent Markov decision problems (MAMDPs). In our study, we adopt a distributed multi-agent framework where individual agents have access only to their own rewards, lacking insights into the rewards of other agents. Moreover, each agent has the ability to share its parameters with neighboring agents through a communication network, represented by a graph. We first introduce a novel distributed DP, inspired by the distributed optimization method of Wang and Elia. Next, a new distributed DP is introduced through a decoupling process. The convergence of the DP algorithms is proved through systems and control perspectives.
The study in this paper sets the stage for new distributed temporal different learning algorithms.
\end{abstract}

\section{Introduction}

A Markov decision problem (MDP)\cite{sutton1998reinforcement,puterman1990markov} is a sequential decision-making problem that aims to find an optimal policy in dynamic environments. Multi-agent MDPs (MAMDPs)\cite{zhang2021multi,lee2020optimization} extend the MDP framework to include multiple agents interacting with one another. These agents can either cooperate toward a shared goal or compete for individual objectives. In this study, we focus primarily on the cooperative scenario.

In MAMDPs, full information about the environment, such as the global state, action, and reward, is often unavailable to each agent. This lack of complete information can arise for a variety of reasons, including sensor or infrastructure limitations, privacy and security concerns, and computational constraints, among others. As a result, various information structures are adopted based on the specific application. One notable instance is the centralized MAMDP, where every agent has access to complete information. In contrast, in distributed MAMDPs, agents may only have access to local data about the global state, action, and rewards. Sometimes, agents can share information with each other via communication networks, an environment termed as the networked MAMDP.

In this paper, we investigate new continuous-time distributed DP algorithms for a networked MAMDP. In this setting, agents can share their local parameters with their neighbors over a communication network described by a graph, $\mathcal{G}$. The proposed algorithms are distributed in the sense that only local rewards, $r_i$, are given to each agent. Meanwhile, the global reward is a sum or an average of the local rewards, i.e., $r = (r_1 + r_2 + \cdots + r_N)/N$, where $N$ is the total number of agents. To address the MAMDP in a distributed manner, we employ the distributed optimization techniques~\cite{jadbabaie2003coordination,nedic2009subgradient,nedic2010constrained,shi2015extra}. These techniques enable multiple agents to calculate a common solution through parameter mixing (or averaging) steps with their neighbors.

In particular, we introduce two novel continuous-time distributed DP~\cite{bertsekas1996neuro} algorithms. The first algorithm is inspired by the distributed optimization technique of Wang and Elia~\cite{wang2010control}. The second DP algorithm is developed through a special decoupling process. The convergence of these algorithms is proved from systems and control perspectives~\cite{khalil2002nonlinear}.

The main contributions of this paper are summarized as follows:
To the authors' best knowledge, the algorithms presented in this paper are the first attempts to develop and analyze distributed DP algorithms characterized by simple continuous-time linear dynamics. These algorithms are readily analyzable from a control theory standpoint. This approach, based on systems and control theory, simplifies and clarifies the analysis, especially for those with a background in control theory, and provides additional insights into the distributed DP. Moreover, this paper establishes a foundation for developing new distributed temporal difference learning algorithms.

While numerous studies~\cite{macua2015distributed,stankovic2016multi,sha2020asynchronous,doan2019finite,wai2018multi,cassano2020multiagent,ding2019fast,heredia2020finite,lee2022distributed} have studied model-free distributed temporal difference learning algorithms under a variety of scenarios and conditions, this paper is among the first to thoroughly investigate model-based continuous-time linear dynamics. Although model-free methods are generally more applicable in a broader range of situations, the approaches in this paper can be readily extended to model-free temporal difference learning techniques.

\section{Preliminaries}
\subsection{Notation and terminology}
The following notation is adopted: ${\mathbb R}^n $ denotes the $n$-dimensional Euclidean space; ${\mathbb R}^{n \times m}$ denotes the set
of all $n \times m$ real matrices; ${\mathbb R}_+ $ and ${\mathbb
R}_{++}$ denote the sets of nonnegative and positive real numbers,
respectively, $A^T$ denotes the transpose of matrix $A$; $I_n$ denotes the $n \times n$
identity matrix; $I$ denotes the identity matrix with  appropriate
dimension; $\|\cdot \|_2$ denotes the standard Euclidean norm; $\|x\|_D:=\sqrt{x^T Dx}$ for any
positive-definite $D$; $\lambda_{\min}(A)$ denotes the minimum eigenvalue of
$A$ for any symmetric
matrix $A$; $|{\mathcal S}|$ denotes
the cardinality of the set for any finite set ${\mathcal S}$; $[x]_i$ is the $i$-th element for any
vector $x$; $[P]_{ij}$ indicates the element in $i$-th row and $j$-th column for any matrix $P$;
if ${\bf z}$ is a discrete random variable which has $n$ values
and $\mu \in {\mathbb R}^n$ is a stochastic vector, then ${\bf z}
\sim \mu$ stands for ${\mathbb P}[{\bf z} = i] = [\mu]_i$ for all
$i \in \{1,\ldots,n \}$; ${\bf 1}_n \in {\mathbb R}^n$ denotes an
$n$-dimensional vector with all entries equal to one.

\subsection{Graph theory}
An undirected graph with the node set ${\mathcal V}$ and the edge set
${\mathcal E}\subseteq {\mathcal V}\times {\mathcal V}$ is denoted by ${\mathcal
G}=({\mathcal E},{\mathcal V})$. We define the neighbor set of node $i$ as
${\mathcal N}_i :=\{j\in {\mathcal V}:(i,j)\in {\mathcal E}\}$. The adjacency
matrix of $\mathcal G$ is defined as a matrix $W$ with $[W]_{ij} = 1$,
if and only if $(i,j) \in {\mathcal E}$. If $\mathcal G$ is undirected,
then $W=W^T$. A graph is connected, if there is a path between any
pair of vertices. The graph Laplacian is $L=H - W$, where $H$ is a
diagonal matrix with $[H]_{ii} = |{\mathcal N}_i|$. If the graph is
undirected, then $L$ is symmetric positive semi-definite. It holds
that $L {\bf 1}_{|{\mathcal V}|}=0$. If ${\mathcal G}$ is connected, $0$
is a simple eigenvalue of $L$, i.e., ${\bf 1}_{|{\mathcal V}|}$ is the unique eigenvector corresponding to $0$, and the span of ${\bf 1}_{|{\mathcal V}|}$ is the null space of $L$.
In this paper, we assume that the underlying network is connected.
\begin{assumption}\label{assumption:connected}
$\cal G$ is connected.
\end{assumption}

\subsection{Markov decision process}\label{sec:MDP-BE}
A Markov decision process (MDP)~\cite{sutton1998reinforcement} is characterized by a quadruple ${\mathcal M}: =
({\mathcal S},{\mathcal A},P,r,\gamma)$, where ${\mathcal S}$ is a finite
state space (observations in general), $\mathcal A$ is a finite action
space, $P(s,a,s')$ represents the (unknown)
state transition probability from state $s$ to $s'$ given action
$a$, $r:{\mathcal S}\times {\mathcal A}\times {\mathcal S}$ is the reward
function, and $\gamma \in (0,1)$ is the discount factor. In particular, if action
$a$ is selected with the current state $s$, then the state
transits to $s'$ with probability $P(s,a,s')$ and incurs a random
reward $r(s,a,s')$. The stochastic policy is a map $\pi:{\mathcal S} \times
{\mathcal A}\to [0,1]$ representing the probability $\pi(a|s)$ of taking action $a$ at the current state $s$, $P^\pi$ denotes the transition matrix, and $d:{\mathcal S} \to {\mathbb R}$ denotes the stationary
distribution of the state $s\in {\mathcal S}$ under the policy $\pi$. We also define
$R^\pi(s)$ as the expected reward given the policy $\pi$ and the current state $s$. The infinite-horizon discounted value function with policy $\pi$ is
\begin{align*}
%================================================================
&J^\pi(s):={\mathbb E} \left[ \left. \sum_{k = 0}^\infty {\gamma
^k r(s_k,a_k,s_{k+1})} \right|s_0=s \right],
%================================================================
\end{align*}
where ${\mathbb E}$ stands for the expectation
taken with respect to the state-action trajectories following the
state transition $P^\pi$. Given pre-selected basis (or feature)
functions $\phi_1,\ldots,\phi_q:{\mathcal S}\to {\mathbb R}$, $\Phi
\in {\mathbb R}^{|{\mathcal S}| \times q}$ is defined as a full column
rank matrix whose $s$-th row vector is $\phi(s):=\begin{bmatrix} \phi_1(s)
&\cdots & \phi_q(s) \end{bmatrix}$. The goal of the Markov decision problem with the linear
function approximation is to find the weight vector $\theta$ such that
$J_{\theta}:=\Phi \theta$ approximates the true value function $J^{\pi}$.
This is typically done by minimizing the {\em mean-square projected Bellman
error} loss function~\cite{sutton2009fast}
\begin{align}
&{\min _{\theta  \in {\mathbb R}^q}}{\rm{MSPBE}}(\theta ): = {\min _{\theta  \in {\mathbb R}^q}}\frac{1}{2}\left\| {\Pi ({R^\pi } + \gamma {P^\pi }\Phi \theta  - \Phi \theta )} \right\|_D^2,\label{eq:mspbe}
\end{align}
where $D$ is a diagonal matrix with positive diagonal elements $d(s),s\in {\mathcal S}$, $\Pi$ is the projection onto the range space of $\Phi$, denoted by $R(\Phi)$: $\Pi(x):=\argmin_{x} \|x-x'\|_D^2$, $x'\in R(\Phi)$, and $R^\pi \in {\mathbb R}^{|{\mathcal S}|}$ is a vector enumerating all $R^\pi(s), s\in {\mathcal S}$. The projection can be performed by the matrix multiplication: we write $\Pi(x):=\Pi x$, where $\Pi:=\Phi(\Phi^T D\Phi)^{-1}\Phi^T D$.
The solutions of~\eqref{eq:mspbe} is known to be equivalent to those of the so-called projected Bellman equation
\begin{align}
\Phi\theta = \Pi(R^\pi+\gamma P^\pi \Phi\theta),\label{eq:projected-Bellman}
\end{align}
whose solution is given by
\begin{align}
\theta^* =-(\Phi^T D(\gamma P^\pi-I)\Phi)^{-1}\Phi^TDR^\pi. \label{eq:sol-of-MSPBE}
\end{align}

\section{Multi-agent MDP}\label{section:distributed MAMDP-overview}
In this section, we introduce the notion of the
distributed MAMDP, which will be studied throughout the paper.
Consider $N$ agents labeled by $i \in \{ 1,\ldots,N\}=:{\mathcal
V}$. A multi-agent Markov decision process is characterized by
$({\mathcal S},\{ {\mathcal A}_i\}_{i\in {\mathcal
V}},P,\{r_i\}_{i \in {\mathcal V}},\gamma)$, where $\gamma \in
(0,1)$ is the discount factor, ${\mathcal S}$ is a finite state
space, ${\mathcal A}_i$ is a finite action space of agent
$i$, $a:=(a_1,\ldots,a_N)$ is the joint
action, ${\mathcal A}:=\prod_{i=1}^N {{\mathcal
A}_i}$ is the
corresponding joint action space, $r_i:{\mathcal S} \times {\mathcal A} \times {\mathcal S} \to {\mathbb R}$
is a reward function of agent $i$, and $P(s,a,s')$ represents the transition model of the state $s$ with the
joint action $a$ and the corresponding joint action space ${\mathcal
A}$. The stochastic policy of agent $i$ is a mapping $\pi_i:{\mathcal
S} \times {\mathcal A}_i \to [0,1]$ representing the probability
$\pi_i(a_i|s)$ of selecting action $a_i$ at the state $s$, and the corresponding joint
policy is $\pi(a|s):=\prod_{i=1}^N {\pi_ i(a_i|s)}$. Moreover, $P^{\pi}$
denotes the transition matrix, and $d:{\mathcal S}
\to {\mathbb R}$ denotes the stationary state distribution under the joint policy $\pi$.
In particular, if the joint action $a$ is
selected with the current state $s$, then the state transits to
$s'$ with probability $P(s,a,s')$, and each agent $i$ observes a
reward $r_i(s,a,s')$. In addition, $J^\pi$ is the infinite-horizon discounted value function
with policy $\pi$ and reward $r =(r_1 +\cdots+ r_N
)/N$ satisfying $J^\pi=\frac{1}{N}\sum_{i =
1}^N {R_i^{\pi}}+\gamma P^\pi J^\pi$.

\begin{problem}[Distributed value evaluation problem]\label{problem:multi-agent-prob}
The goal of each agent $i$ is to find the value function of the centralized reward $r=(r_1+\cdots+ r_N)/N$, where only the local reward $r_i$ is given to each agent, and parameters can be shared with its neighbors over communication network represented by the graph $\cal G$.
\end{problem}

We also note that we can also consider the following scenario: there are $N$ agents behave in $N$ copies of identical and independent environments, and each agent $i$ observes the current state $s$ in its own environment, executes an action $a \in {\cal A}$ according to the policy $\pi$, and it causes the state $s\in {\cal S}$ to transit to $s' \in {\cal S}$ with probability $P(s,a,s')$ in each independent environment. Then, the agent receives the local reward $r_i(s,a,s')$.

In this paper, we assume that each agent does not have access to the rewards of the other agents. For instance, there is no centralized coordinator; thus, each agent is unaware of the rewards of other agents. On the other hand, we suppose that each agent knows only the parameters of adjacent agents over the network graph, assuming that the agents can communicate with each other. Without each agent knowing the full reward algorithm of the group, our algorithm produces the same result as if each agent were receiving the average rewards of the group.

\section{Continuous-time distributed dynamic programming}
For the sake of notational simplicity in representing a multi-agent environment, we first introduce the stacked vector and matrix notations:
\begin{align*}
&\bar \theta: = \begin{bmatrix}
   \theta_1 \\
    \vdots \\
   \theta_N \\
\end{bmatrix},\quad \bar R^\pi  := \begin{bmatrix}
   R_1^\pi \\
    \vdots \\
   R_N^\pi \\
\end{bmatrix},\\
&\bar P^{\pi}:=I_N \otimes P^{\pi},\quad \bar L:=L
\otimes I_q,\quad \bar D:=I_N \otimes D,\\
& \bar\Phi:=I_N \otimes \Phi
\end{align*}
Before delving into the distributed dynamic programming (DP), it is beneficial to examine the centralized version of DP. This provides a foundation that can be extended to the distributed version. The centralized variant can be naturally derived from the solution of the MSPBE for a single agent, as shown in~\eqref{eq:sol-of-MSPBE}.

\subsection{Centralized dynamic programming}
In the centralized multi-agent case, the same reward $R_c^\pi =(R_1^\pi+ \cdots + R_N^\pi)/N$ for every agent is given. Then, it can be simply considered as the single agent case with stacked vector and matrix notation. According to the single-agent MDP results in~\eqref{eq:sol-of-MSPBE}, the optimal solution is given as
\begin{align}
{\bar \theta ^*} :=  - {({\bar \Phi ^T}\bar D(\gamma {\bar P^\pi } - I)\bar \Phi )^{ - 1}}{\bar \Phi ^T}\bar D({\bf{1}}_N \otimes R_c^\pi ),\label{eq:theta-global2}
\end{align}
which minimizes the corresponding MSPBE~\eqref{eq:mspbe}.
Using algebraic manipulations, we can easily prove that $\bar \theta ^*$ can be represented by $\bar \theta ^*  = {\bf{1}}_N \otimes \theta _\infty ^ c$, where
\begin{align}
\theta_\infty^c := - (\Phi^T D(-I + \gamma P^\pi)\Phi)^{-1} \Phi^T D R_c^\pi.\label{eq:theta-global}
\end{align}

The solution can be found using a standard DP method~\cite{bertsekas1996neuro}. In this paper, we will consider a DP in the continuous-time domain (or ODE) as follows:
\begin{align}
\frac{d}{dt}\bar\theta_t =\bar\Phi^T \bar D(-I+\gamma\bar P^\pi)\bar\Phi\bar\theta_t+\bar\Phi^T \bar D({\bf{1}}_N\otimes R_c^\pi),\label{eq:1}
\end{align}
which is a linear ODE. We can easily prove that $\bar \theta ^*$ is an asymptotically stable equilibrium point.
\begin{proposition}~\label{thm:proposition-central}
$\bar \theta ^*$ is a unique asymptotically stable equilibrium point of the linear system in~\eqref{eq:1}, i.e., $\bar \theta_t \to \bar \theta^*$ as $t\to \infty$.
\end{proposition}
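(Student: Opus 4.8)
The plan is to read~\eqref{eq:1} as a linear time-invariant system $\frac{d}{dt}\bar\theta_t = A\bar\theta_t + b$ with
\[
A := \bar\Phi^T\bar D(-I+\gamma\bar P^\pi)\bar\Phi,\qquad b := \bar\Phi^T\bar D({\bf 1}\otimes R_c^\pi),
\]
and to reduce everything to the claim that $A$ is Hurwitz. Once that is in hand, standard linear-systems theory gives that the equilibrium $-A^{-1}b$ exists, is unique (since Hurwitz implies $A$ invertible), and is globally asymptotically — indeed exponentially — stable; a direct computation using the mixed-product property of the Kronecker product then identifies $-A^{-1}b$ with $\bar\theta^*$ in~\eqref{eq:theta-global}. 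That same mixed-product property, together with $\bar\Phi = I_N\otimes\Phi$, $\bar D = I_N\otimes D$, and $\bar P^\pi = I_N\otimes P^\pi$, puts $A$ in block-diagonal form $A = I_N\otimes A_0$ with $A_0 := \Phi^T D(\gamma P^\pi-I)\Phi$; hence $A$ is Hurwitz if and only if the single-agent matrix $A_0$ is, and it suffices to analyze $A_0$.

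To show $A_0$ is Hurwitz I would establish the stronger fact $x^T A_0 x < 0$ for every nonzero $x\in{\mathbb R}^q$. Since $\Phi$ has full column rank, $y := \Phi x$ is nonzero, and $x^T A_0 x = \gamma\, y^T D P^\pi y - \|y\|_D^2$. The technical core is the weighted non-expansiveness $\|P^\pi y\|_D \le \|y\|_D$: because $D$ encodes the stationary distribution $d$ (so $d^T P^\pi = d^T$), Jensen's inequality applied rowwise to the stochastic matrix $P^\pi$ yields $\|P^\pi y\|_D^2 = \sum_s d(s)\big(\sum_{s'}[P^\pi]_{ss'}[y]_{s'}\big)^2 \le \sum_s d(s)\sum_{s'}[P^\pi]_{ss'}[y]_{s'}^2 = \sum_{s'} d(s')[y]_{s'}^2 = \|y\|_D^2$. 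Cauchy–Schwarz in the $D$-inner product then gives $y^T D P^\pi y \le \|y\|_D\|P^\pi y\|_D \le \|y\|_D^2$, so $x^T A_0 x \le (\gamma-1)\|y\|_D^2 < 0$ because $\gamma\in(0,1)$.

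From $x^T A_0 x<0$ for all $x\neq 0$, the symmetric part $A_0 + A_0^T$ is negative definite, and for any (possibly complex) eigenpair $A_0 v=\lambda v$ with $v=a+\mathrm i b$, $a,b\in{\mathbb R}^q$ not both zero, one gets $\mathrm{Re}(\lambda)\|v\|_2^2 = \mathrm{Re}(v^*A_0 v) = \tfrac12\big(a^T(A_0+A_0^T)a + b^T(A_0+A_0^T)b\big) < 0$, so every eigenvalue of $A_0$ — and therefore of $A = I_N\otimes A_0$ — has strictly negative real part. For the stability conclusion I would set $e_t := \bar\theta_t - \bar\theta^*$, so $\frac{d}{dt}e_t = A e_t$, and take $V(e):=\tfrac12\|e\|_2^2$ as a Lyapunov function: $\dot V = e^T A e = \sum_{i=1}^N e_i^T A_0 e_i < 0$ for all $e\neq 0$, which certifies $\bar\theta_t\to\bar\theta^*$ from every initial condition. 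I expect the only non-routine ingredient to be the $D$-weighted non-expansiveness of $P^\pi$; the Kronecker bookkeeping and the eigenvalue/Lyapunov arguments are standard.
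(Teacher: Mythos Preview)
Your proposal is correct and follows essentially the same route as the paper: rewrite~\eqref{eq:1} as $\frac{d}{dt}(\bar\theta_t-\bar\theta^*)=A(\bar\theta_t-\bar\theta^*)$ and show that $A=\bar\Phi^T\bar D(\gamma\bar P^\pi-I)\bar\Phi$ is Hurwitz via negative definiteness of its symmetric part. The only difference is that the paper dispatches the negative-definiteness claim by citing~\cite[pp.~209]{bhatnagar2012stochastic}, whereas you supply the self-contained argument (Kronecker block-diagonalization $A=I_N\otimes A_0$, then $\|P^\pi y\|_D\le\|y\|_D$ via Jensen plus stationarity of $d$, then Cauchy--Schwarz); this is exactly the standard proof behind that citation, so the approaches coincide.
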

\begin{proof}
Using $\bar \Phi ^T \bar D(\gamma \bar P^\pi   - I)\bar \Phi \bar \theta ^*  + \bar \Phi ^T \bar D\bar R^\pi   = 0$,~\eqref{eq:1} can be represented by the linear system
\[
\frac{d}{dt}(\bar \theta _t  - \bar \theta ^* ) = \bar \Phi ^T \bar D( - I + \gamma \bar P^\pi  )\bar \Phi (\bar \theta _t  - \bar \theta ^*).
\]
We can easily prove that $\bar \Phi ^T \bar D( - I + \gamma \bar P^\pi  )\bar \Phi$ is negative definite, and hence, Hurwitz~\cite[pp.~209]{bhatnagar2012stochastic}. Therefore, $\bar \theta _t  - \bar \theta ^* \to 0$ as $t \to \infty$, which completes the proof.
\end{proof}

To solve~\eqref{eq:1}, we must assume that the central reward $R_c^\pi$ is accessible to all agents. In subsequent sections, we will explore distributed versions where only the local reward $R^\pi_i$ is provided to each agent $i$. We present two versions: the first is inspired by~\cite{wang2010control}, while the second is a novel approach that offers more desirable properties compared to the first when integrated into reinforcement learning (RL) frameworks~\cite{sutton1988learning}.

\subsection{Distributed dynamic programming version~1}
In the networked multi-agent setting, each agent receives each of their local rewards, and parameters from neighbors over a communication graph. Based on the ideas of Wang and Elia in~\cite{wang2010control}, we can convert the continuous-time ODE in~\eqref{eq:1} into
\begin{align}
\frac{d}{dt}\bar\theta_t &= \bar\Phi^T \bar D(-I+\gamma \bar P^\pi)\bar\Phi\bar\theta_t + \bar\Phi^T \bar D\bar R^\pi-\bar L\bar\theta_t-\bar L\bar w_t,\nonumber\\
\frac{d}{dt}\bar w_t  &= \bar L\bar \theta_t.\label{eq:ver1}
\end{align}
Compared to~\eqref{eq:1}, the above ODE consists of an auxiliary vector $\bar w_t$ and the graph Laplacian matrix $\bar L$.
Here, the Laplacian helps the consensus of each agent, and the auxiliary vector potentially allows agents make better use of their local information. Note that each agent only uses local information by multiplying the Laplacian in both equations. The ODE in~\eqref{eq:ver1} can be written as~\Cref{algo:1} from a local view.
%=====================================================================================
\begin{algorithm}[t]

\caption{Distributed dynamic programming version~1}\label{algo:1}
%=====================================================================================
\begin{algorithmic}[1]
\State Initialize $\theta_0^i,w_0^i,i\in \{1,2,\ldots,N \}$.
\For{$t \ge 0$}
   \For{agent $i\in \{1,\ldots,N\}$}
    \State Update
    \begin{align*}
    \frac{d}{dt}\theta_t^i &= \Phi^T D(-I + \gamma P^\pi)\Phi\theta_t^i + \Phi^T DR_i^\pi\\
    &-\left(|{\cal N}_i |{\theta_t^i} - \sum_{j\in {\cal N}_i} {\theta_t^j} \right) - \left( |{\cal N}_i |w_t^i - \sum_{j\in {\cal N}_i } {w_t^j} \right)\\
    \frac{d}{dt}w_t^i &= |{\cal N}_i|\theta_t^i - \sum_{j\in {\cal N}_i} {\theta_t^j}
    \end{align*}
    where ${\mathcal N}_i$ is the neighborhood of node $i$ on the graph ${\mathcal G}$.
    \EndFor
\EndFor

%=====================================================================================
\end{algorithmic}
%=====================================================================================
\end{algorithm}

As can be seen from~\Cref{algo:1}, each agent $i$ updates its local parameter $\theta_t^i$ using its own reward $R_i^\pi$ and parameters of its neighbors $\theta_t^j,j\in {\cal N}_i$. Nevertheless, we can prove that each agent $i$ can find the global solution $\theta_\infty^c$ given in~\eqref{eq:theta-global}. To this end, we first provide stationary points of this system in the next result, and then prove that both weight vector $\bar \theta_t$ and auxiliary vector $\bar w_t$ reach the stationary point.
\begin{proposition}[Equilibrium points]~\label{proposition:proposition-ver1-equil}
The unique equilibrium point, $\bar \theta_\infty$, of the linear system in~\eqref{eq:ver1} corresponding to the vector $\bar \theta_t$ is given by $\bar \theta ^*  = {\bf{1}}_N  \otimes \theta _\infty ^c$, where $\theta_\infty ^c$ is defined in~\eqref{eq:theta-global}. Moreover, for the auxiliary vectors $\bar w_t$, the corresponding equilibrium points are all solutions, $\bar w_\infty$, of the following linear equation:
\begin{align}
\bar L\bar w_\infty = \begin{bmatrix}
   \Phi^T D\left(R_1^\pi - \frac{1}{N}\sum_{i=1}^N {R_i^\pi} \right)\\
   \Phi^T D\left(R_2^\pi - \frac{1}{N}\sum_{i=1}^N {R_i^\pi} \right)\\
    \vdots   \\
   \Phi^T D\left(R_N^\pi - \frac{1}{N}\sum_{i=1}^N {R_i^\pi} \right)\\
\end{bmatrix}\label{eq:lin_eq}.
\end{align}
\end{proposition}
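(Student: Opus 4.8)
The plan is to set both time derivatives in~\eqref{eq:ver1} to zero and analyze the resulting algebraic system. Writing $A := \Phi^T D(-I+\gamma P^\pi)\Phi$, the stacked coefficient matrix factors as $\bar\Phi^T \bar D(-I+\gamma\bar P^\pi)\bar\Phi = I_N\otimes A$ by the Kronecker identities $(M_1\otimes M_2)(M_3\otimes M_4)=(M_1M_3)\otimes(M_2M_4)$, and $A$ is negative definite (hence invertible), which is exactly the fact established in the proof of~\Cref{thm:proposition-central} applied to $I_N\otimes A$. The equilibrium conditions are thus
\begin{align}
0 &= (I_N\otimes A)\bar\theta_\infty + \bar\Phi^T\bar D\bar R^\pi - \bar L\bar\theta_\infty - \bar L\bar w_\infty,\nonumber\\
0 &= \bar L\bar\theta_\infty.\nonumber
\end{align}

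First I would exploit the second equation. Since $\mathcal G$ is connected, the null space of $L$ is the span of $\mathbf 1_N$, so the null space of $\bar L = L\otimes I_q$ is $\{\mathbf 1_N\otimes v : v\in\mathbb R^q\}$; hence $\bar\theta_\infty = \mathbf 1_N\otimes\theta_\infty$ for some $\theta_\infty\in\mathbb R^q$, and the term $\bar L\bar\theta_\infty$ drops out of the first equation. Next I would left-multiply the first equation by $\mathbf 1_N^T\otimes I_q$. Because $\mathbf 1_N^T L = 0$, this annihilates $\bar L\bar w_\infty$, and a short Kronecker computation gives $(\mathbf 1_N^T\otimes I_q)(I_N\otimes A)(\mathbf 1_N\otimes\theta_\infty) = N A\theta_\infty$ together with $(\mathbf 1_N^T\otimes I_q)\bar\Phi^T\bar D\bar R^\pi = \Phi^T D\sum_{i=1}^N R_i^\pi$, so the condition collapses to $A\theta_\infty + \Phi^T D R_c^\pi = 0$ with $R_c^\pi = \frac1N\sum_{i=1}^N R_i^\pi$. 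Invertibility of $A$ then forces $\theta_\infty = -A^{-1}\Phi^T D R_c^\pi = \theta_\infty^c$ as in~\eqref{eq:theta-global}, i.e.\ $\bar\theta_\infty = \mathbf 1_N\otimes\theta_\infty^c = \bar\theta^*$, which is unique.

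Finally I would substitute $\bar\theta_\infty = \bar\theta^*$ back into the first equilibrium equation. Since $\bar L\bar\theta^* = 0$, it reduces to $\bar L\bar w_\infty = (I_N\otimes A)(\mathbf 1_N\otimes\theta_\infty^c) + \bar\Phi^T\bar D\bar R^\pi$; using $A\theta_\infty^c = -\Phi^T D R_c^\pi$ and expanding the stacked vectors blockwise yields precisely the right-hand side of~\eqref{eq:lin_eq}. The set of equilibria in $\bar w_t$ is therefore exactly the solution set of~\eqref{eq:lin_eq}, and this set is nonempty because the block-sum of its right-hand side equals $\Phi^T D\sum_{i=1}^N(R_i^\pi - R_c^\pi) = 0$, so the right-hand side lies in the range of $\bar L$ (the orthogonal complement of $\{\mathbf 1_N\otimes v\}$, since $L$ is symmetric). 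I expect the only delicate points to be the Kronecker-product bookkeeping and keeping straight that the $\bar\theta$-component of the equilibrium is unique whereas the $\bar w$-component is an entire affine subspace, because $\bar L$ has a nontrivial kernel and hence $\bar w_\infty$ is determined only modulo $\{\mathbf 1_N\otimes v : v\in\mathbb R^q\}$.
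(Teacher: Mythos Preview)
Your proof is correct and follows essentially the same approach as the paper: set the derivatives to zero, use $\bar L\bar\theta_\infty=0$ to write $\bar\theta_\infty=\mathbf 1_N\otimes v$, left-multiply the first equilibrium equation by $(\mathbf 1_N\otimes I)^T$ to kill the $\bar L\bar w_\infty$ term and identify $v=\theta_\infty^c$, then substitute back to obtain~\eqref{eq:lin_eq}. Your added remark that the right-hand side of~\eqref{eq:lin_eq} lies in the range of $\bar L$ (so that the solution set is nonempty) is a nice completeness check that the paper omits.
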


The proof of~\cref{proposition:proposition-ver1-equil} is given in Appendix~\ref{proof:1}.
\cref{proposition:proposition-ver1-equil} implies that the local parameter $\theta^i_t$ reaches a consensus, i.e.,
\[
\lim_{t \to \infty } \theta _t^1  = \lim_{t \to \infty } \theta _t^2  =  \cdots  = \lim_{t \to \infty } \theta _t^N  = \theta _\infty ^c.  \]
On the other hand, $\bar w_\infty$ lies in an affine subspace, which is infinite. Next, we prove global asymptotic stability of the equilibrium points, whose proof is given in Appendix~\ref{proof:2}.
\begin{proposition}[Global asymptotic stability]~\label{proposition:proposition-ver1-GAS} The equilibrium points $(\bar \theta_\infty, \bar w_\infty)$ of~\eqref{eq:ver1} is globally asymptotically stable, i.e., $\bar \theta_t \to \bar \theta_\infty = \bar \theta^*$ and $\bar w_t \to \bar w_\infty$ as $t\to \infty$.
\end{proposition}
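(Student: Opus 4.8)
The plan is to read~\eqref{eq:ver1} as a linear time-invariant system and to establish convergence by a Lyapunov/LaSalle argument in error coordinates, after first dealing with the fact that the equilibrium set is degenerate. By~\Cref{proposition:proposition-ver1-equil}, the $\bar\theta$-equilibrium $\bar\theta_\infty=\bar\theta^*={\bf 1}_N\otimes\theta_\infty^c$ is unique, but the set of admissible $\bar w_\infty$ is the whole affine solution set of~\eqref{eq:lin_eq}, which is a coset of the null space of $\bar L$. The structural fact I would exploit is that $\frac{d}{dt}\bar w_t=\bar L\bar\theta_t$ always lies in the range space of $\bar L$; hence, writing $\Pi_1:=\frac1N{\bf 1}_N{\bf 1}_N^T$ and using $L{\bf 1}_N=0$ with $L=L^T$ (so $\Pi_1 L=0$ and $\bar L=\bar L^T$), the quantity $(\Pi_1\otimes I)\bar w_t$ is conserved along every trajectory. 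Given the initialization $\bar w_0$, I would therefore single out the unique $\bar w_\infty$ solving~\eqref{eq:lin_eq} whose null-space component equals $(\Pi_1\otimes I)\bar w_0$; such a solution exists because the right-hand side of~\eqref{eq:lin_eq} is orthogonal to $\mathrm{span}({\bf 1}_N)\otimes{\mathbb R}^q$, i.e. it lies in the range of $\bar L$. Convergence is then to be understood with respect to this $(\bar\theta_\infty,\bar w_\infty)$, and ``global asymptotic stability'' as global attractivity together with stability of this equilibrium inside the invariant affine subspace $\{(\bar\theta,\bar w):(\Pi_1\otimes I)\bar w=(\Pi_1\otimes I)\bar w_0\}$.

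Next I would move to error coordinates $\tilde\theta_t:=\bar\theta_t-\bar\theta_\infty$ and $\tilde w_t:=\bar w_t-\bar w_\infty$. Subtracting the equilibrium identities from~\eqref{eq:ver1} yields the homogeneous linear system
\begin{align*}
\frac{d}{dt}\tilde\theta_t&=\bar\Phi^T\bar D(-I+\gamma\bar P^\pi)\bar\Phi\,\tilde\theta_t-\bar L\tilde\theta_t-\bar L\tilde w_t,\\
\frac{d}{dt}\tilde w_t&=\bar L\tilde\theta_t,
\end{align*}
and, by the choice of $\bar w_\infty$ and the conservation remark, $(\Pi_1\otimes I)\tilde w_t\equiv(\Pi_1\otimes I)\tilde w_0=0$, so $\tilde w_t$ stays in the range space of $\bar L$ for all $t$. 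I would take $V(\tilde\theta,\tilde w):=\frac12\|\tilde\theta\|_2^2+\frac12\|\tilde w\|_2^2$, which is positive definite and radially unbounded. Differentiating along trajectories, the cross terms $-\tilde\theta_t^T\bar L\tilde w_t$ and $\tilde w_t^T\bar L\tilde\theta_t$ cancel since $\bar L=\bar L^T$, leaving
\[
\frac{d}{dt}V=\tilde\theta_t^T\bar\Phi^T\bar D(-I+\gamma\bar P^\pi)\bar\Phi\,\tilde\theta_t-\tilde\theta_t^T\bar L\tilde\theta_t\le\tilde\theta_t^T\bar\Phi^T\bar D(-I+\gamma\bar P^\pi)\bar\Phi\,\tilde\theta_t,
\]
where I used that $\bar L$ is positive semi-definite. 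The remaining quadratic form is strictly negative for $\tilde\theta_t\ne0$: reusing the estimate underlying~\Cref{thm:proposition-central} (for $y\ne0$, $(\Phi y)^TD(\gamma P^\pi-I)(\Phi y)\le(\gamma-1)\|\Phi y\|_D^2<0$, and the Kronecker-product structure lifts this to the stacked matrices), $\bar\Phi^T\bar D(-I+\gamma\bar P^\pi)\bar\Phi$ is negative definite. Hence $\dot V\le0$ everywhere and $\dot V=0$ exactly on the set $\{\tilde\theta=0\}$.

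Finally I would apply LaSalle's invariance principle on the closed invariant subspace ${\mathcal M}:={\mathbb R}^{Nq}\times\mathrm{range}(\bar L)$ in which the error trajectory lives (forward invariance follows since $\dot{\tilde w}_t=\bar L\tilde\theta_t\in\mathrm{range}(\bar L)$). Since $V$ restricted to ${\mathcal M}$ is positive definite, radially unbounded, and $\dot V\le0$, every error trajectory is bounded and converges to the largest invariant subset of $\{\dot V=0\}\cap{\mathcal M}=\{\tilde\theta=0\}\cap{\mathcal M}$. On that set, $\tilde\theta_t\equiv0$ forces $\frac{d}{dt}\tilde\theta_t\equiv0$, hence $\bar L\tilde w_t\equiv0$, so $\tilde w_t$ lies in the null space of $\bar L$; combined with $\tilde w_t\in\mathrm{range}(\bar L)$, this gives $\tilde w_t\equiv0$. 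Thus the largest invariant subset is $\{0\}$, and $(\tilde\theta_t,\tilde w_t)\to(0,0)$, i.e. $\bar\theta_t\to\bar\theta_\infty=\bar\theta^*$ and $\bar w_t\to\bar w_\infty$ for every initialization. The main obstacle is precisely the degeneracy of the equilibrium set: $\dot V$ is only negative \emph{semi}definite (it vanishes on the entire hyperplane $\tilde\theta=0$), so a direct Lyapunov-decrease argument cannot by itself yield convergence of $\bar w_t$; one must both isolate the conserved component $(\Pi_1\otimes I)\bar w_t$ to pin down the correct limiting $\bar w_\infty$ and invoke LaSalle (equivalently, a detectability argument for this LTI system) to carry the conclusion over to the $\bar w$-coordinate.
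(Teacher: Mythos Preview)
Your argument is correct and shares the same backbone as the paper's: both pass to error coordinates $(\tilde\theta,\tilde w)$, use the quadratic Lyapunov function $V=\|\tilde\theta\|^2+\|\tilde w\|^2$, and observe that after the skew cross-terms cancel, $\dot V$ reduces to a negative-definite quadratic form in $\tilde\theta$ alone via the bound $\bar\Phi^T\bar D(\gamma\bar P^\pi-I)\bar\Phi+\bar\Phi^T(\gamma\bar P^\pi-I)^T\bar D\bar\Phi\preceq 2(\gamma-1)\bar D$.

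The execution differs in two respects. First, to extract $\tilde\theta_t\to0$ from $\dot V\le0$, the paper integrates $\dot V$ to bound $\int_0^\infty\|\tilde\theta_t\|^2\,dt$ and invokes Barbalat's lemma, whereas you go through LaSalle's invariance principle. Second, and more substantively, you treat the degeneracy of the $\bar w$-equilibrium explicitly: you record that $(\Pi_1\otimes I)\bar w_t$ is conserved, pick the unique $\bar w_\infty$ consistent with the initial data, and then LaSalle on the invariant subspace $\mathrm{range}(\bar L)$ forces $\tilde w_t\to0$ (using that $\tilde w_t$ must lie simultaneously in the null space and the range of the symmetric $\bar L$). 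The paper instead takes $t\to\infty$ in the $\dot{\tilde\theta}$ equation and concludes only that $\bar L\bar w_t$ converges; it never isolates a specific $\bar w_\infty$ or argues that the null-space component of $\bar w_t$ settles. So your route is the more complete one for the $\bar w$-coordinate. What the paper's Barbalat computation buys is a slightly more quantitative handle (the explicit $\lambda_{\min}((1-\gamma)\bar D+\bar L)$ constant), while your conserved-quantity plus LaSalle argument buys a clean, self-contained treatment of the non-unique equilibrium set.
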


\cref{proposition:proposition-ver1-GAS} establishes that the first DP version converges to the solution $\bar \theta^*$.
As a potential application, it is easy to envision the development of a distributed RL by replacing certain terms in~\Cref{algo:1} with sample transitions of the underlying MDP. In such a scenario, the asymptotic stability of the continuous-time DP in~\Cref{algo:1} could be leveraged to demonstrate the convergence of the RL, using the well-established Borkar-Meyn theorem~\cite{borkar2000ode}. A primary challenge in applying the Borkar-Meyn theorem is the non-uniqueness of the equilibrium point of the ODE in~\eqref{eq:ver1}. For the Borkar-Meyn theorem's application, the existence of a unique equilibrium point is a prerequisite. As such, the stability analysis for~\Cref{algo:1} cannot be directly converted to its RL counterpart. In the following subsection, we introduce the second version, which can potentially address the aforementioned challenges.

\subsection{Distributed dynamic programming version~2}

Motivated by the aforementioned discussion, we propose the following continuous-time DP:
\begin{align}
\frac{d}{dt}\bar\theta_t  &= \bar\Phi^T \bar D(-I+\gamma\bar P^\pi) \bar\Phi\bar\theta_t+\bar\Phi^T \bar D\bar R^\pi - \bar L\bar \theta _t \nonumber\\
\frac{d}{dt}\bar w_t  &= \bar\theta_t-\bar w_t - \bar L\bar w_t -\bar L\bar v_t\nonumber\\
\frac{d}{dt}\bar v_t  &= \bar L\bar w_t\label{eq:dv/dt=Lw_t}
\end{align}
\begin{algorithm}[t]
%=====================================================================================
\caption{Distributed dynamic programming version~2}
\begin{algorithmic}[1]
%=====================================================================================

\State Initialize $\theta_0^i,w_0^i,i\in \{1,2,\ldots,N \}$.

\For{$t \ge 0$}
    \For{agent $i\in \{1,\ldots,N\}$}
    \State Update
    \begin{align*}
    \begin{split}
    \frac{d}{dt}\theta_t^i &= \Phi^T D(-I+\gamma P^\pi)\Phi\theta_t^i +\Phi^T DR_i^\pi \\ &-\left(|{\mathcal N}_i|\theta_t^i - \sum_{j\in {\mathcal N}_i} {\theta_t^j}\right)
    \end{split} \\
    \begin{split}
    \frac{d}{dt}w_t^i &= \theta_t^i - w_t^i\\ &- \left( |{\mathcal N}_i |w_t^i-\sum_{j\in {\mathcal N}_i}{w_t^j} \right) -\left(| {\mathcal N}_i|v_t^i -\sum_{j \in {\mathcal N}_i}{v_t^j}\right)
    \end{split}\\
    \begin{split}
    \frac{d}{dt}v_t^i {}&= |{ \mathcal N}_i|w_t^i - \sum_{j\in {\mathcal N}_i} {w_t^j}
    \end{split}
    \end{align*}
    where ${\mathcal N}_i$ is the neighborhood of node $i$ on the graph ${\mathcal G}$.
    \EndFor
\EndFor
%=====================================================================================
\end{algorithmic}\label{algo:2}
%=====================================================================================
\end{algorithm}

The overall algorithm, when viewed locally, is summarized in~\Cref{algo:2}.
We can demonstrate that $\bar w_t \to \bar \theta^*$ as $t\to \infty$, where $\bar \theta^*$ is defined in~\eqref{eq:theta-global2}. Thus, this DP can serve as an alternative to the DP in~\eqref{eq:ver1}. A key distinction between the current DP and its predecessor is the decoupling of the ODE corresponding to $\bar \theta_t$ from the components linked to $(\bar w_t, \bar v_t)$
The ODE for $\bar \theta_t$ can be seen as the local value function estimation, while the ODE for $(\bar w_t, \bar v_t)$ represents the parameter mixing component. This characteristic renders it more apt for RL applications. We will first establish the equilibrium points of~\eqref{eq:dv/dt=Lw_t} and their asymptotic stability.
\begin{proposition}[Equilibrium points]\label{proposition:3}
The unique equilibrium point, $\bar w_\infty$, of the linear system in~\eqref{eq:dv/dt=Lw_t} corresponding to the vector $\bar w_t$ is given by $\bar w_\infty = {\bf{1}}_N  \otimes \theta _\infty ^c = \bar \theta^*$, where $\theta_\infty ^c$ is defined in~\eqref{eq:theta-global}.
Moreover, for the vector $\bar \theta_t$, the corresponding equilibrium points, $\bar \theta_\infty$, are all solutions of the following linear equation:
\begin{align}
\frac{1}{N}\sum_{i=1}^N {\theta_\infty^i} = -(\Phi^T D(-I+\gamma P^\pi)\Phi)^{-1} \Phi^T D\left( \frac{1}{N}\sum_{i=1}^N {R_i^\pi}\right)\label{eq:3}
\end{align}
For another vector $\bar v_t$, the corresponding equilibrium points, $\bar v_\infty$, are all solutions of the following linear equation:
\begin{align}
\bar L\bar v_\infty =\bar\theta_\infty-\bar w_\infty.\label{eq:5}
\end{align}
\end{proposition}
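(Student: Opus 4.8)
The plan is to characterize the equilibria by setting the three right-hand sides of~\eqref{eq:dv/dt=Lw_t} to zero and then peeling off the resulting conditions one equation at a time, in the spirit of the proof of~\Cref{proposition:proposition-ver1-equil}. Write $A := \Phi^T D(-I+\gamma P^\pi)\Phi$, so that $\bar\Phi^T\bar D(-I+\gamma\bar P^\pi)\bar\Phi = I_N\otimes A$ while $\bar\Phi^T\bar D\bar R^\pi$ stacks the blocks $\Phi^T DR_i^\pi$. I will use three facts established earlier: $A$ is nonsingular~\cite[pp.~209]{bhatnagar2012stochastic}; for a connected graph the null space of $\bar L = L\otimes I_q$ is $\{{\bf 1}_N\otimes u : u\in{\mathbb R}^q\}$; and $({\bf 1}_N\otimes I_q)^T\bar L = 0$ since $L{\bf 1}_N = 0$ and $L = L^T$.

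First, $\frac{d}{dt}\bar v_t = 0$ reads $\bar L\bar w_\infty = 0$, so $\bar w_\infty = {\bf 1}_N\otimes u$ for some $u\in{\mathbb R}^q$. Substituting $\bar L\bar w_\infty = 0$ into $\frac{d}{dt}\bar w_t = 0$ gives $\bar\theta_\infty - \bar w_\infty - \bar L\bar v_\infty = 0$, which is exactly~\eqref{eq:5}; since $\bar L$ is symmetric, solvability of this equation in $\bar v_\infty$ forces $\bar\theta_\infty - \bar w_\infty$ into the range of $\bar L$, hence $({\bf 1}_N\otimes I_q)^T(\bar\theta_\infty - \bar w_\infty) = 0$, i.e.\ $\sum_{i=1}^N\theta_\infty^i = Nu$. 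Next, $\frac{d}{dt}\bar\theta_t = 0$ is $(I_N\otimes A)\bar\theta_\infty + \bar\Phi^T\bar D\bar R^\pi - \bar L\bar\theta_\infty = 0$; left-multiplying by $({\bf 1}_N\otimes I_q)^T$ and using $({\bf 1}_N\otimes I_q)^T\bar L = 0$ together with the mixed-product rule $(A\otimes B)(C\otimes E) = (AC)\otimes(BE)$ yields $A\sum_{i=1}^N\theta_\infty^i + \Phi^T D\sum_{i=1}^N R_i^\pi = 0$, so dividing by $N$ and inverting $A$ gives~\eqref{eq:3}, i.e.\ $\frac{1}{N}\sum_i\theta_\infty^i = \theta_\infty^c$. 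Combining this with $\sum_i\theta_\infty^i = Nu$ forces $u = \theta_\infty^c$, so $\bar w_\infty = {\bf 1}_N\otimes\theta_\infty^c$, and this value does not depend on the particular equilibrium, which gives uniqueness of $\bar w_\infty$. To close the characterization I would also verify the converse: given $\bar w_\infty = {\bf 1}_N\otimes\theta_\infty^c$, any $\bar\theta_\infty$ solving the full first equilibrium equation and any $\bar v_\infty$ solving~\eqref{eq:5} produces a genuine equilibrium (the first equation in fact has a unique solution, since $I_N\otimes A - \bar L$ has negative-definite symmetric part by~\eqref{eq:8}, and that solution automatically obeys~\eqref{eq:3}).

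The only step I expect to be non-routine is the uniqueness of $\bar w_\infty$. In contrast with~\Cref{proposition:proposition-ver1-equil}, the constant $u$ is not pinned down by the $\bar w$-equation by itself; the decisive observation is that the consistency (Fredholm-type) condition for~\eqref{eq:5} to admit a solution $\bar v_\infty$ is precisely what ties $u$ to the average of $\bar\theta_\infty$, after which the $\bar\theta$-equilibrium equation identifies that average with $\theta_\infty^c$. Once this coupling is spotted, the remainder is bookkeeping with Kronecker products and the mixed-product rule.
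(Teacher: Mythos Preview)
Your proposal is correct and follows essentially the same route as the paper: set the three right-hand sides of~\eqref{eq:dv/dt=Lw_t} to zero, use $\bar L\bar w_\infty=0$ to get $\bar w_\infty={\bf 1}_N\otimes u$, then left-multiply the first and second equilibrium equations by $({\bf 1}_N\otimes I_q)^T$ to extract~\eqref{eq:3} and the identity $\sum_i\theta_\infty^i=Nu$, which together pin down $u=\theta_\infty^c$. Your Fredholm-type phrasing for the solvability of~\eqref{eq:5} is equivalent to the paper's direct left-multiplication step, and your added remarks on the converse and the uniqueness of $\bar\theta_\infty$ (via the negative-definite symmetric part of $I_N\otimes A-\bar L$) go slightly beyond what the paper records but are correct.
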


The proof of~\cref{proposition:3} is given in Appendix~\ref{proof:3}.
Next, we establish the global asymptotic stability of~\eqref{eq:dv/dt=Lw_t}, whose proof is given in Appendix~\ref{proof:4}.
\begin{proposition}[Global asymptotic stability]\label{proposition:proposition-ver2-GAS}
The equilibrium points $(\bar \theta_\infty, \bar w_\infty, \bar v_\infty)$ of~\eqref{eq:dv/dt=Lw_t} is globally asymptotically stable, i.e., $\bar \theta_t \to \bar \theta_\infty$, $\bar w_t \to \bar w_\infty = \bar \theta^*$, and $\bar v_t \to \bar v_\infty$ as $t\to \infty$.
\end{proposition}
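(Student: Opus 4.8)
The plan is to exploit the one-way cascade structure of~\eqref{eq:dv/dt=Lw_t}: the $\bar\theta_t$-equation is autonomous and drives the $(\bar w_t,\bar v_t)$-equations, which do not feed back into $\bar\theta_t$. First I would show that the drift matrix $A:=\bar\Phi^T\bar D(-I+\gamma\bar P^\pi)\bar\Phi-\bar L$ of the $\bar\theta_t$-equation is Hurwitz: by~\eqref{eq:8} together with $\bar D\succ0$, $\bar L\succeq0$ and $\gamma-1<0$ one gets $A+A^T\preceq 2(\gamma-1)\bar D-2\bar L\prec0$, and a matrix with negative-definite symmetric part has all eigenvalues in the open left half-plane (pair the eigenrelation with the eigenvector). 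Hence $\bar\theta_t\to\bar\theta_\infty$, in fact exponentially, where $\bar\theta_\infty:=-A^{-1}\bar\Phi^T\bar D\bar R^\pi$ is the (unique, since $A$ is invertible) equilibrium of the autonomous $\bar\theta_t$-equation, and in particular one of the solutions of~\eqref{eq:3}.

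The delicate point is that the full system is \emph{not} asymptotically stable to a single point, since~\eqref{eq:5} only determines $\bar v_\infty$ modulo $\ker\bar L$, so one must isolate the conserved quantity that selects the limit. Because $\bar L$ is symmetric, $\tfrac{d}{dt}\bar v_t=\bar L\bar w_t\in\operatorname{range}\bar L=(\ker\bar L)^\perp$, so with $P$ the orthogonal projection onto $\ker\bar L$ we have $\tfrac{d}{dt}(P\bar v_t)=0$, i.e.\ $P\bar v_t\equiv P\bar v_0$. I would then fix the target equilibrium by $\bar w_\infty={\bf 1}_N\otimes\theta_\infty^c$ and $\bar v_\infty:=\bar L^{\dagger}(\bar\theta_\infty-\bar w_\infty)+P\bar v_0$; this triple solves the stationarity conditions in the proof of~\Cref{proposition:3}, using (exactly as there) that $P(\bar\theta_\infty-\bar w_\infty)=0$, hence $\bar\theta_\infty-\bar w_\infty\in\operatorname{range}\bar L$. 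In the error variables $e_t^\theta=\bar\theta_t-\bar\theta_\infty$, $e_t^w=\bar w_t-\bar w_\infty$, $e_t^v=\bar v_t-\bar v_\infty$ the dynamics become $\dot e_t^\theta=Ae_t^\theta$, $\dot e_t^w=-(I+\bar L)e_t^w-\bar L e_t^v+e_t^\theta$, $\dot e_t^v=\bar L e_t^w$, and by construction $Pe_t^v\equiv0$, so the whole error trajectory stays in the invariant subspace ${\mathcal S}:=\{(e^\theta,e^w,e^v):Pe^v=0\}$; restricted to ${\mathcal S}$ the generator is block lower-triangular with diagonal blocks $A$ and $M|_{\mathcal W}$, where $M:=\left[\begin{smallmatrix}-I-\bar L&-\bar L\\ \bar L&0\end{smallmatrix}\right]$ and ${\mathcal W}:=\{(x,y):Py=0\}$.

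The crux is to show $M|_{\mathcal W}$ is Hurwitz. Take an eigenpair $M\!\left[\begin{smallmatrix}x\\ y\end{smallmatrix}\right]=\lambda\!\left[\begin{smallmatrix}x\\ y\end{smallmatrix}\right]$ with $Py=0$ and $(x,y)\neq0$. The cases $x=0$ and $\lambda=0$ are eliminated first (each forces $y\in\ker\bar L\cap\operatorname{range}\bar L=\{0\}$, hence $(x,y)=0$). For $x\neq0$, $\lambda\neq0$, substituting $y=\lambda^{-1}\bar L x$ from the second row into the first and pairing with $x^*$ collapses the relation to the scalar quadratic $a\lambda^2+(a+b)\lambda+c=0$ with $a=\|x\|_2^2>0$, $b=x^*\bar L x\ge0$, $c=\|\bar L x\|_2^2\ge0$; the Routh--Hurwitz criterion (all coefficients $\ge0$, the first two $>0$) yields $\operatorname{Re}\lambda<0$. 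Hence the generator of the error dynamics on ${\mathcal S}$ is Hurwitz, so $(e_t^\theta,e_t^w,e_t^v)\to0$, i.e.\ $\bar\theta_t\to\bar\theta_\infty$, $\bar w_t\to\bar w_\infty$, $\bar v_t\to\bar v_\infty$. I expect the Hurwitz check for $M|_{\mathcal W}$, and the bookkeeping around $\ker\bar L$ (both in isolating the conserved quantity and in the degenerate eigenvalue cases), to be the main obstacle. A more computational route, closer in spirit to the proof of~\Cref{proposition:proposition-ver1-GAS}, is to take $V=\tfrac{\alpha}{2}\|e_t^\theta\|_2^2+\tfrac12\|e_t^w\|_2^2+\tfrac12\|e_t^v\|_2^2$ with $\alpha$ large: the $\bar L$-cross-terms between $e_t^w$ and $e_t^v$ cancel by symmetry of $\bar L$, so $\dot V\le-c_1\|e_t^\theta\|_2^2-c_2\|e_t^w\|_2^2\le0$; Barbalat's lemma then gives $e_t^\theta,e_t^w\to0$ and (via boundedness of $\ddot e_t^w$) $\dot e_t^w\to0$, whence $\bar L e_t^v\to0$ and finally $e_t^v\to0$ because $e_t^v\in\operatorname{range}\bar L$.
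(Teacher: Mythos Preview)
Your proposal is correct, and your primary route is genuinely different from the paper's. The paper proceeds much like your ``more computational route'': it first proves $\bar\theta_t\to\bar\theta_\infty$ exponentially via the quadratic Lyapunov function $V=\|\bar\theta_t-\bar\theta_\infty\|_2^2$ and inequality~\eqref{eq:8}, then takes $V=\|\bar w_t-\bar w_\infty\|_2^2+\|\bar v_t-\bar v_\infty\|_2^2$, uses the cancellation of the $\bar L$-cross-terms, integrates, applies Young's inequality to the $(\bar w_t-\bar w_\infty)^T(\bar\theta_t-\bar\theta_\infty)$ term, and invokes Barbalat to get $\bar w_t\to\bar w_\infty$; convergence of $\bar v_t$ is then inferred from $\dot{\bar v}_t=\bar L\bar w_t\to0$.

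Your main route---restricting to the invariant affine subspace $\{P\bar v_t=P\bar v_0\}$ via the conserved quantity and showing the restricted generator is Hurwitz through the scalar quadratic in $\lambda$---is a cleaner linear-algebraic argument that directly yields exponential convergence of the full error and, crucially, pins down \emph{which} $\bar v_\infty$ is the limit (namely $\bar L^{\dagger}(\bar\theta_\infty-\bar w_\infty)+P\bar v_0$). The paper never isolates this conserved quantity; its final step (``$\tfrac{d}{dt}\bar v_t\to0$ implies $\bar v_t$ converges'') is in fact a soft spot, since a vanishing derivative alone does not force convergence. Your observation that $e_t^v\in\operatorname{range}\bar L$ for all $t$, combined with $\bar L e_t^v\to0$, is exactly what closes that gap, and it is absent from the paper's argument. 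Conversely, the paper's approach is more elementary in that it avoids any spectral computation and stays within the Lyapunov/Barbalat toolkit already used for~\Cref{proposition:proposition-ver1-GAS}.
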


\begin{example}
Let us consider the Markov decision process borrowed from~\cite{lee2022distributed} with
\begin{align*}
&P^\pi =\begin{bmatrix}
   0.1 & 0.5 & 0.2 & 0.2 \\
   0.5 & 0.0 & 0.1 & 0.4 \\
   0.0 & 0.9 & 0.1 & 0.0 \\
   0.2 & 0.1 & 0.1 & 0.6  \\
\end{bmatrix},
\end{align*}
where $\pi$ is not explicitly specified, $|{\mathcal S}|=4$, $\gamma = 0.8$, and the local expected reward functions
\begin{align*}
&R_1^\pi=\begin{bmatrix}
   0 & 0 & 0 & 50\\
\end{bmatrix}^T,\quad R_2^\pi=\begin{bmatrix}
   0 & 0 & 0 & 0\\
\end{bmatrix}^T,\\
&R_3^\pi = \begin{bmatrix}
   0 & 0 & 0 & 0\\
\end{bmatrix}^T,\quad R_4^\pi= \begin{bmatrix}
   0 & 0 & 0 & 0\\
\end{bmatrix}^T,\\
&R_5^\pi= \begin{bmatrix}
   0 & 0 & 0 & 0\\
\end{bmatrix}^T,
\end{align*}

The feature matrix is
\begin{align*}
\Phi  = \left[ {\begin{array}{*{20}{c}}
1&0\\
1&0\\
0&2\\
0&2
\end{array}} \right]
\end{align*}

and the five RL agents over the network given in~\cref{fig:graph}.
\begin{figure}[h]
\centering\epsfig{figure=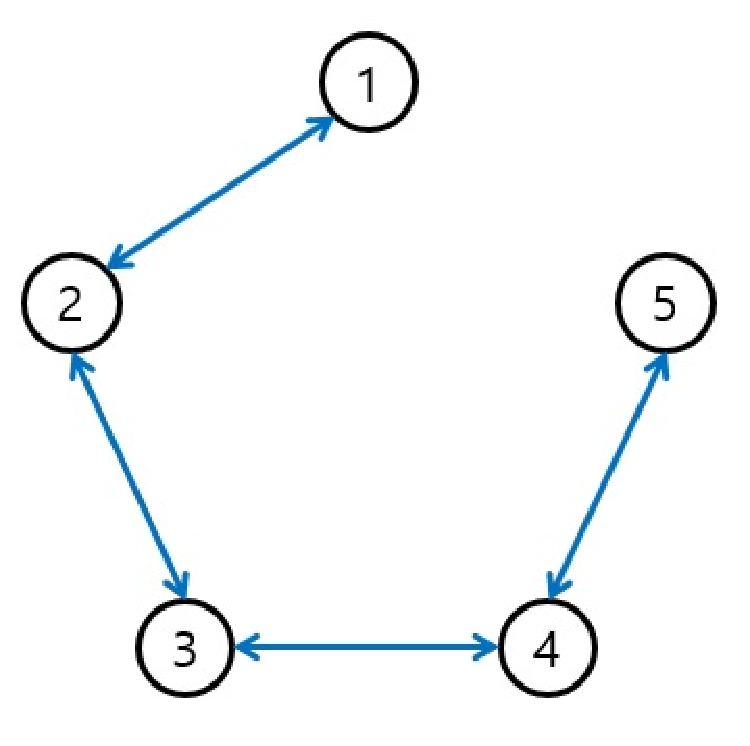,width=4cm} \caption{Network topology of five RL agents.}\label{fig:graph}
\end{figure}
\begin{figure}[h!]
\centering\epsfig{figure=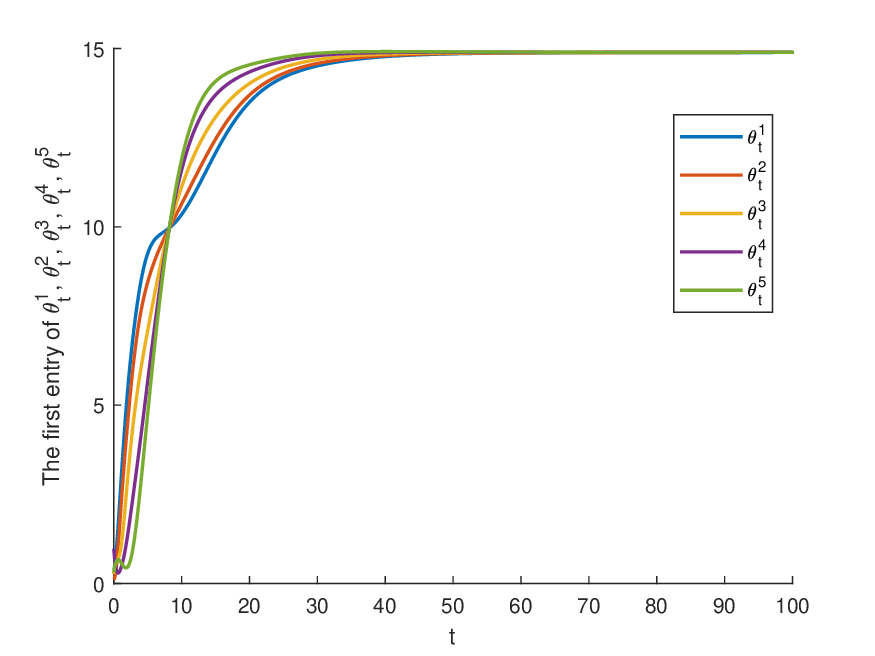,width=7cm}
\caption{\cref{algo:1}: Evolution of the first entries of $\theta_t^1$, $\theta_t^2$, $\theta_t^3$, $\theta_t^4$, and $\theta_t^5$.}\label{fig:1}
\end{figure}
\begin{figure}[h!]
\centering\epsfig{figure=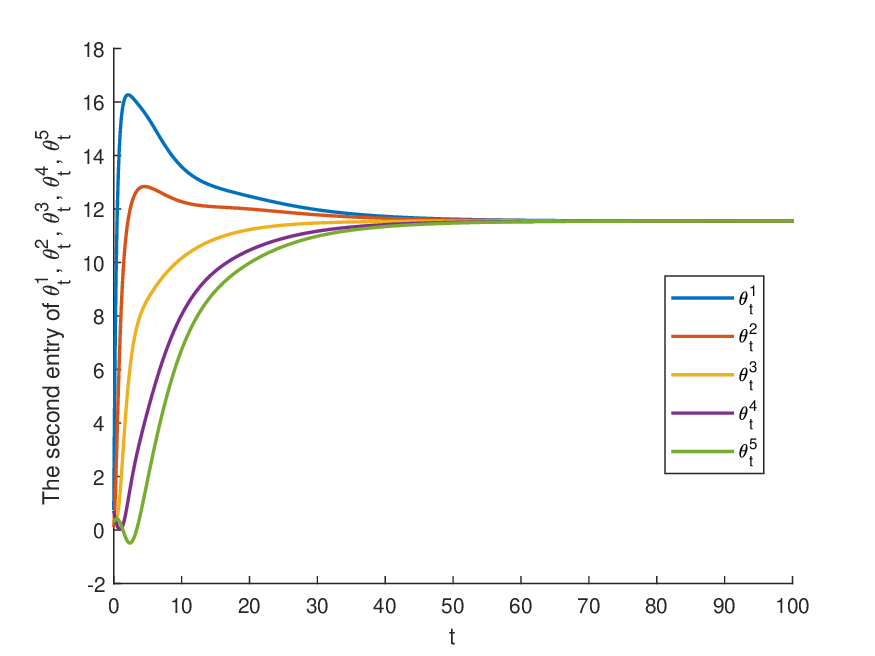,width=7cm}
\caption{\cref{algo:1}: Evolution of the second entries of $\theta_t^1$, $\theta_t^2$, $\theta_t^3$, $\theta_t^4$, and $\theta_t^5$.}\label{fig:2}
\end{figure}
\begin{figure}[h!]
\centering\epsfig{figure=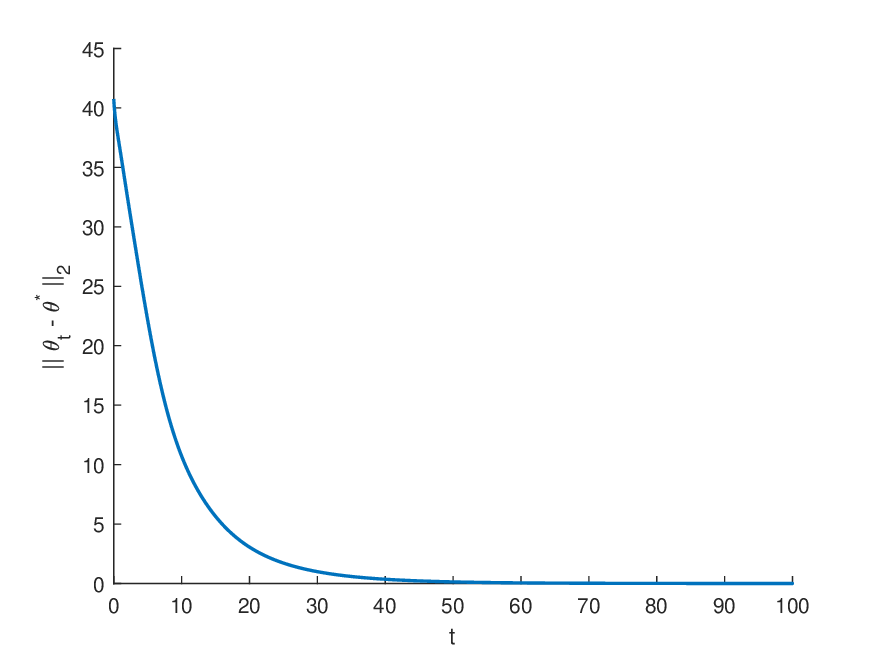,width=7cm}
\caption{\cref{algo:1}: Evolution of ${\left\| {{{\bar \theta }_t} - {{\bar \theta }^*}} \right\|_2}$.}\label{fig:3}
\end{figure}
For~\cref{algo:1},~\cref{fig:1} depicts the evolutions of the first entries of $\theta_t^1$, $\theta_t^2$, $\theta_t^3$, and $\theta_t^4$. Similarly,~\cref{fig:2} illustrates the evolutions of the first entries of $\theta_t^1$, $\theta_t^2$, $\theta_t^3$, and $\theta_t^4$. These results demonstrate that the parameters of the five agents reach a consensus.
\cref{fig:3} shows the evolution of the error $\| {{{\bar \theta }_t} - {{\bar \theta }^*}} \|_2$, and empirically proves that the parameter of the agents ${\bar \theta }_t$ converges to the optimal solution ${\bar \theta }^*$.

\begin{figure}[h!]
\centering\epsfig{figure=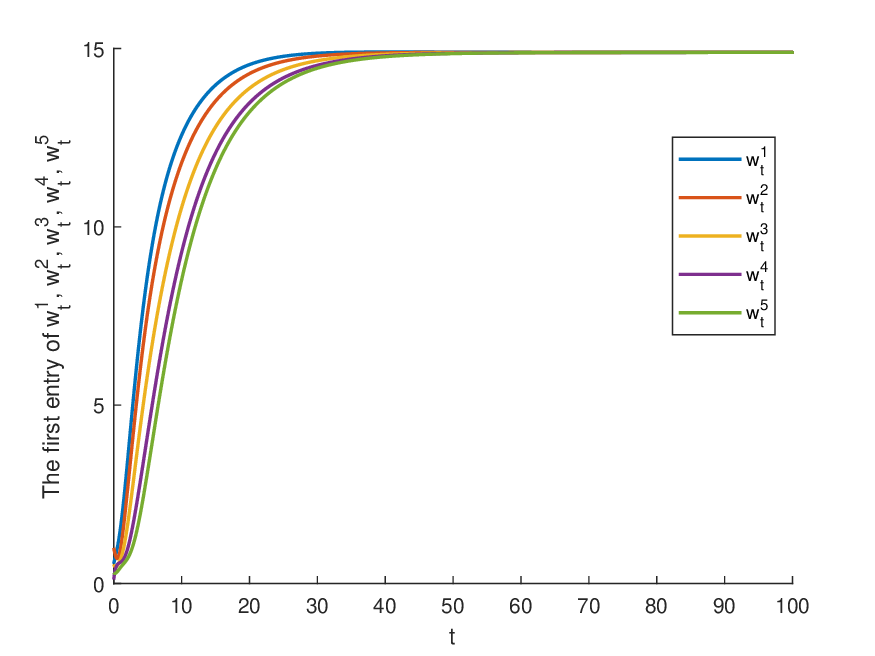,width=7cm}
\caption{\cref{algo:2}: Evolution of the first entries of $w_t^1$, $w_t^2$, $w_t^3$, $w_t^4$, and $w_t^5$.}\label{fig:4}
\end{figure}
\begin{figure}[h!]
\centering\epsfig{figure=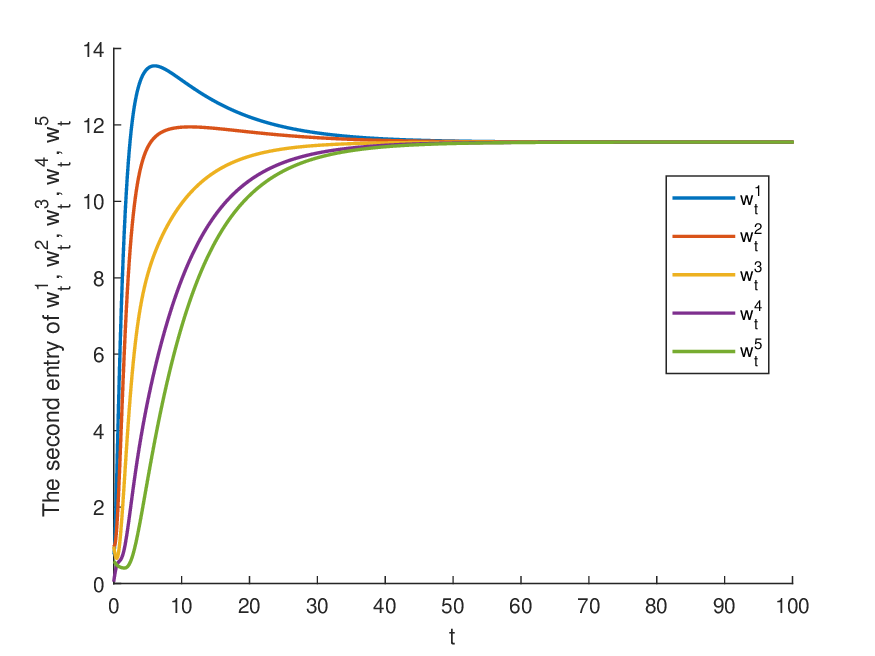,width=7cm}
\caption{\cref{algo:2}: Evolution of the second entries of $w_t^1$, $w_t^2$, $w_t^3$, $w_t^4$, and $w_t^5$.}\label{fig:5}
\end{figure}
\begin{figure}[h!]
\centering\epsfig{figure=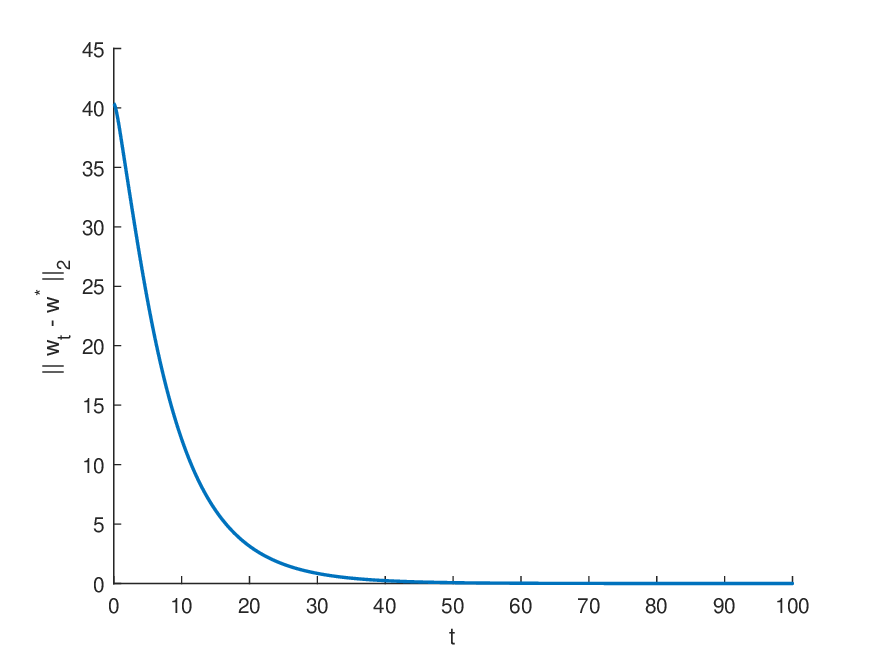,width=7cm}
\caption{\cref{algo:2}: Evolution of ${\left\| {{{\bar w}_t} - {{\bar \theta }^*}} \right\|_2}$.}\label{fig:6}
\end{figure}
Next,~\cref{fig:4},~\cref{fig:5},~\cref{fig:6} give similar results corresponding to~\cref{algo:2}.
The results also empirically demonstrate the validity of the proposed~\cref{algo:2}. We can also observe that both algorithms have similar convergence speeds.
\end{example}

\section{Conclusion}
This paper introduces new continuous-time distributed DP algorithms for MAMDPs and establishes their convergence. These are the initial efforts to develop distributed DP algorithms with simple continuous-time linear dynamics. The development and analysis, based on in systems and control theory, allow for more intuitive analysis from a control theory perspective, and improve clarity particularly for people with backgrounds in systems and control. The results in this paper offer further insights into distributed DP algorithms. Furthermore, the paper sets a foundation for the development of new distributed temporal difference learning algorithms. Finally, we expect that the results in this paper can be potentially extended to the reinforcement learning~\cite{sutton1998reinforcement} counterparts using the O.D.E. methods~\cite{kushner2003stochastic,borkar2000ode}. Moreover, the results can be extended to the multi-agent Q-learning scenarios to find optimal policies, for example, using the switching system framework in~\cite{lee2020unified,lee2021discrete}. which are potential future topics.

\bibliographystyle{IEEEtran}
\bibliography{reference}

\section{Appendix}

\subsection{Proof of Proposition~\ref{proposition:proposition-ver1-equil}}\label{proof:1}
Let $(\bar \theta_\infty, \bar w_\infty)$ be an equilibrium point corresponding to $(\bar \theta_t, \bar w_t)$. Then, it should satisfy the following equation:
\begin{align*}
\bar\Phi^T \bar D(-I+\gamma \bar P^\pi)\bar\Phi \bar\theta_\infty + \bar\Phi^T \bar D\bar R^\pi-\bar L \bar \theta_\infty-\bar L \bar w_\infty&=0 \\
\bar L \bar\theta_\infty&=0.
\end{align*}

The second equation implies $\bar \theta _\infty = {\bf{1}}_N  \otimes v$ for some $v \in {\mathbb R}^q$.
Plugging this relation into the first equation, we have
\begin{align}
\bar \Phi^T \bar D(-I + \gamma\bar P^\pi)\bar \Phi \bar \theta_\infty + \bar\Phi^T \bar D\bar R^\pi - \bar L \bar w_\infty = 0,\label{eq:proof-1}
\end{align}
where we used $\bar L \bar\theta_\infty=0$. Multiplying $({\bf 1}_N  \otimes I)^T$ from the left yields
\[
({\bf 1}_N \otimes I)^T [\bar\Phi^T \bar D(-I+\gamma \bar P^\pi)\bar\Phi \bar\theta_\infty + \bar\Phi^T \bar D\bar R^\pi] = 0,
\]
where we used $({\bf{1}}_N  \otimes I)^T \bar L \bar w_\infty = 0$. The equation can be equivalently written as
\[
N\Phi^T D(-I + \gamma P^\pi)\Phi v = -\sum_{i=1}^N {\Phi^T DR_i^\pi}
\]

Since $\Phi^T D(-I + \gamma P^\pi)\Phi$ is nonsingular from~\cite[pp.~209]{bhatnagar2012stochastic}, it follows that
\[
v = -(\Phi^T D(-I + \gamma P^\pi)\Phi)^{-1} \Phi ^T D\left( \frac{1}{N}\sum_{i=1}^N {R_i^\pi} \right),
\]
which is identical to~\eqref{eq:theta-global}, i.e., $v = \theta_\infty^c$. Therefore, $\bar \theta _\infty ^c  = {\bf{1}}_N  \otimes v = {\bf{1}}_N  \otimes \theta _\infty ^c   = \bar \theta ^*$. Plugging it back to~\eqref{eq:proof-1} leads to~\eqref{eq:lin_eq}. This completes the proof.

\subsection{Proof of Proposition~\ref{proposition:proposition-ver1-GAS}}\label{proof:2}
With $\bar\Phi^T \bar D(-I + \gamma\bar P^\pi)\bar\Phi \bar\theta_\infty + \bar\Phi^T \bar D\bar R^\pi - \bar L\bar w_\infty=0$ and $\bar L \bar \theta_\infty = 0$, the ODEs in~\eqref{eq:ver1} become
\begin{align}
\begin{split}
\frac{d}{dt}(\bar\theta_t - \bar\theta_\infty) {}&= \bar\Phi^T \bar D(-I + \gamma\bar P^\pi)\bar\Phi (\bar\theta_t - \bar\theta_\infty)\\
& -\bar L(\bar\theta_t-\bar\theta_\infty) - \bar L(\bar w_t - \bar w_\infty)\label{eq:ode-theta}
\end{split}\\
\begin{split}
\frac{d}{dt}(\bar w_t - \bar w_\infty) {}&= \bar L(\bar\theta_t - \bar\theta_\infty)\nonumber.
\end{split}
\end{align}
Consider the function
\[
V\left( \begin{bmatrix}
   \bar\theta_t - \bar\theta_\infty \\
   \bar w_t  - \bar w_\infty\\
\end{bmatrix} \right) = \begin{bmatrix}
   \bar \theta_t  - \bar \theta_\infty\\
   \bar w_t - \bar w_\infty  \\
\end{bmatrix}^T \begin{bmatrix}
   \bar \theta_t  - \bar\theta_\infty\\
   \bar w_t  - \bar w_\infty \\
\end{bmatrix}.
\]
Its time-derivative along the trajectory is
\begin{align*}
&\dot V\left( \begin{bmatrix}
   \bar\theta_t - \bar \theta_\infty\\
   \bar w_t - \bar w_\infty \\
\end{bmatrix} \right)\\
=& \begin{bmatrix}
   \bar\theta_t -\bar\theta_\infty\\
   \bar w_t -\bar w_\infty\\
\end{bmatrix}^T \begin{bmatrix}
   2\bar\Phi^T \bar D(-I + \gamma\bar P^\pi)\bar\Phi - 2\bar L & 0\\
   0 & 0  \\
\end{bmatrix}\\
&\times\begin{bmatrix}
   \bar\theta_t -\bar\theta_\infty\\
   \bar w_t - \bar w_\infty  \\
\end{bmatrix}\\
=& \ (\bar\theta_t - \bar\theta_\infty)^T [2\bar\Phi^T \bar D(-I + \gamma\bar P^\pi)\bar\Phi - 2\bar L](\bar\theta_t-\bar\theta_\infty).
\end{align*}
Using the following well-known inequality~\cite[pp.~209]{bhatnagar2012stochastic}:
\begin{align}
\bar \Phi^T\bar D(\gamma\bar P^\pi-I)\bar\Phi +\bar\Phi ^T (\gamma \bar P^\pi-I)^T \bar D\bar \Phi  \preceq 2(\gamma  - 1)\bar D,\label{eq:8}
\end{align}
one gets $\bar \Phi ^T \bar D( - I + \gamma \bar P^\pi  )\bar \Phi  + \bar \Phi ^T (\gamma \bar P^\pi   - I)^T \bar D\bar \Phi  -  2 \bar L \preceq 2(\gamma  - 1) \bar D - 2 \bar L \prec 0$, where the second inequality is due to $\bar D \succ 0, \bar L \succeq 0$, and $\gamma -1 <0$.
Equivalently, we have
\begin{align*}
&\dot V\left(\begin{bmatrix}
   \bar\theta_t - \bar\theta_\infty\\
   \bar w_t - \bar w_\infty  \\
\end{bmatrix}\right)\\
\le& (\bar\theta_t - \bar\theta_\infty)^T ((\gamma-1)\bar D - \bar L)(\bar\theta_t -\bar\theta_\infty)\\
<& 0,
\end{align*}
for any $\bar\theta_t -\bar\theta_\infty \neq 0$.
Taking the integral on both sides and rearranging terms lead to
\begin{align*}
&V\left(\begin{bmatrix}
   \bar \theta_T - \bar \theta_\infty\\
   \bar w_T - \bar w_\infty \\
\end{bmatrix}\right) - V\left(\begin{bmatrix}
   \bar\theta_0 - \bar\theta_\infty\\
   \bar w_0 - \bar w_\infty\\
\end{bmatrix}\right)\\
\le & \int_0^T {(\bar\theta_t - \bar\theta_\infty)^T ((\gamma-1)\bar D - \bar L)(\bar\theta_t - \bar\theta_\infty)dt}
\end{align*}
Rearranging terms lead to
\begin{align*}
&\lambda _{\min } ((1 - \gamma )\bar D + \bar L)\int_0^T {(\bar \theta _t  - \bar \theta _\infty  )^T (\bar \theta _t  - \bar \theta _\infty  )dt}\\
\le&\int_0^T {(\bar \theta _t  - \bar \theta _\infty  )^T ((1 - \gamma )\bar D +\bar L)(\bar \theta _t  - \bar \theta _\infty  )dt} \\
\le& V\left( \begin{bmatrix}
   {\bar \theta _0  - \bar \theta _\infty  }  \\
   {\bar w_0  - \bar w_\infty  }  \\
\end{bmatrix} \right) - V\left( \begin{bmatrix}
   {\bar \theta _T  - \bar \theta _\infty  }  \\
   {\bar w_T  - \bar w_\infty  }  \\
\end{bmatrix} \right)\\
\le& V\left( \begin{bmatrix}
   {\bar \theta _0  - \bar \theta _\infty  }  \\
   {\bar w_0  - \bar w_\infty  }  \\
\end{bmatrix} \right).
\end{align*}

Taking the limit $T \to \infty$ yields
\begin{align*}
&\int_0^\infty  {(\bar\theta_t  - \bar\theta_\infty)^T (\bar\theta_t - \bar\theta_\infty)dt}\\
\le& \frac{1}{\lambda_{\min}((\gamma-1)\bar D-\bar L)}
V\left( \begin{bmatrix}
\bar\theta_0 - \bar\theta_\infty\\
\bar w_0 - \bar w_\infty\\
\end{bmatrix}\right)
\end{align*}
which implies from Barbalat's lemma, that $\bar\theta_t \to \bar\theta_\infty$ as $t \to \infty$. Now, taking the limit $t \to \infty$ on both sides of~\eqref{eq:ode-theta} yields $\bar L\bar w_\infty = \lim_{t\to\infty} \bar L\bar w_t$.
Combining the above equation with~\eqref{eq:lin_eq} leads to $\lim_{t\to\infty} \bar L\bar w_t = 0$, which is the desired conclusion.

\subsection{Proof of Proposition~\ref{proposition:3}}\label{proof:3}
First of all, note that the stationary points should satisfy
\begin{align}
\bar\Phi^T\bar D(-I+\gamma\bar P^\pi)\bar\Phi\bar\theta_\infty +\bar\Phi^T \bar D\bar R^\pi-\bar L\bar\theta_\infty &=0\nonumber\\
\bar\theta_\infty -\bar w_\infty - \bar L\bar w_\infty-\bar L\bar v_\infty &=0\nonumber\\
\bar L\bar w_\infty &=0\label{eq:6}
\end{align}
Multiplying $({\bf 1}_N \otimes I)^T$ from the left, the first equation becomes
\[
\Phi^T D(-I+\gamma P^\pi)\Phi \sum_{i=1}^N {\theta_\infty^i}+\Phi^T D\sum_{i=1}^N {R_i^\pi}= 0.
\]
Rearranging terms, we can prove that $\theta_\infty^i$ should satisfy~\eqref{eq:3}. On the other hand, the third equation implies
\begin{align}
w_\infty^1 = w_\infty^2 = \cdots = w_\infty ^N =:w_\infty.\label{eq:4}
\end{align}
Plugging this relation into the second equation and multiplying $({\bf 1}_N\otimes I)^T$ from the left, we have
\[
({\bf 1}_N \otimes I)^T \bar\theta_\infty-({\bf 1}_N\otimes I)^T \bar w_\infty=0.
\]
Combining the above equation with~\eqref{eq:4} leads to
\begin{align*}
w_\infty &= \frac{1}{N}\sum_{i=1}^N {\theta_\infty^i}\\&= -(\Phi^T D(-I+\gamma P^\pi)\Phi)^{-1} \Phi^T D\left( \frac{1}{N}\sum_{i=1}^N {R_i^\pi}\right),
\end{align*}
where the second equality comes from~\eqref{eq:3}. Finally, the second equation with $\bar L\bar w_\infty$ results in~\eqref{eq:5}. This completes the proof.

\subsection{Proof of Proposition~\ref{proposition:proposition-ver2-GAS}}\label{proof:4}
Noting that the equilibrium points satisfy~\eqref{eq:6}, the ODEs in~\eqref{eq:dv/dt=Lw_t} can be written by
\begin{align}
\frac{d}{dt}(\bar\theta_t -\bar\theta_\infty) &=[\bar\Phi^T \bar D(-I+\gamma\bar P^\pi)\bar\Phi-\bar L](\bar\theta_t - \bar \theta _\infty  )\nonumber\\
\frac{d}{dt}(\bar w_t-\bar w_\infty)  &= (\bar\theta_t-\bar\theta_\infty)-(I+\bar L)(\bar w_t-\bar w_\infty)\nonumber\\
&-\bar L(\bar v_t - \bar v_\infty)\nonumber\\
\frac{d}{dt}(\bar v_t-\bar v_\infty) &=\bar L(\bar w_t - \bar w_\infty)\label{eq:9}
\end{align}
Let us consider the Lyapunov function candidate
\[
V(\bar\theta_t -\bar\theta_\infty) = (\bar\theta_t-\bar\theta_\infty)^T (\bar\theta_t-\bar\theta_\infty),
\]
whose time-derivative along the trajectory is
\begin{align*}
&\frac{d}{dt}V(\bar\theta_t-\bar\theta_\infty)\\
=& \ (\bar\theta_t-\bar\theta_\infty)^T (\bar D(-I+\gamma \bar P^\pi)\bar\Phi \\
&+ (\bar D(-I+\gamma\bar P^\pi)\bar\Phi)^T - 2\bar L)(\bar\theta_t - \bar\theta_\infty)\\
<& \ 0
\end{align*}
for all $\bar\theta_t -\bar\theta_\infty\neq 0$, where the last inequality is due to~\eqref{eq:8} and $\bar L \succeq 0$. By the Lyapunov theorem~\cite{khalil2002nonlinear}, $\bar\theta_t \to \bar\theta_\infty$ as $t \to \infty$. Moreover, since the system is a linear system, the convergence is exponential. For the convergence of $\bar w_t$, consider the function
\begin{align*}
&V(\bar w_t-\bar w_\infty,\bar v_t-\bar v_\infty)\\
=& \ (\bar w_t -\bar w_\infty)^T (\bar w_t-\bar w_\infty)+ (\bar v_t - \bar v_\infty)^T (\bar v_t -\bar v_\infty),
\end{align*}
whose time-derivative along the trajectories is
\begin{align*}
&\frac{d}{dt}V(\bar w_t - \bar w_\infty,\bar v_t -\bar v_\infty) \\
=& -(\bar w_t -\bar w_\infty)^T (2I + 2\bar L)(\bar w_t-\bar w_\infty)\\
&+2(\bar w_t -\bar w_\infty)^T (\bar\theta_t -\bar\theta_\infty).
\end{align*}

Integrating both sides from $t=0$ to $T$ yields
\begin{align*}
V(&\bar w_T - \bar w_\infty, \bar v_T-\bar v_\infty)-V(\bar w_0 - \bar w_\infty,\bar v_0 -\bar v_\infty) \\&= -\int_0^T {(\bar w_t - \bar w_\infty)^T (2I + 2\bar L)(\bar w_t -\bar w_\infty)dt}\\
&+2\int_0^T {(\bar w_t -\bar w_\infty)^T (\bar\theta_t-\bar\theta_\infty)dt}
\end{align*}
Rearranging terms lead to
\begin{align*}
&2\int_0^T {(\bar w_t-\bar w_\infty)^T (I + \bar L)(\bar w_t-\bar w_\infty)dt}\\
 &=-V(\bar w_T-\bar w_\infty,\bar v_T - \bar v_\infty) + V(\bar w_0 - \bar w_\infty,\bar v_0 - \bar v_\infty)\\
&+2\int_0^T {(\bar w_t - \bar w_\infty)^T (\bar\theta_t-\bar\theta_\infty)dt}\\
 &\le V(\bar w_0 - \bar w_\infty,\bar v_0 - \bar v_\infty)+2\int_0^T {(\bar w_t -\bar w_\infty)^T (\bar \theta_t - \bar\theta_\infty)dt}\\
\begin{split}
&\le V(\bar w_0 - \bar w_\infty,\bar v_0 - \bar v_\infty) + \int_0^T {(\bar w_t-\bar w_\infty)^T (\bar w_t  - \bar w_\infty)dt} \\&+ \int_0^T {(\bar\theta_t-\bar\theta_\infty)^T (\bar\theta_t-\bar\theta_\infty)dt}
\end{split}
\end{align*}
where the second inequality comes from the Young's inequality. Rearranging some terms again, we have
\begin{align*}
&\int_0^T {(\bar w_t - \bar w_\infty)^T (I+2\bar L)(\bar w_t - \bar w_\infty)dt}\\
\le& \ V(\bar w_0-\bar w_\infty,\bar v_0 - \bar v_\infty)+\int_0^T {(\bar\theta_t -\bar\theta_\infty)^T (\bar\theta_t-\bar \theta_\infty)dt}
\end{align*}
The integral on the right-hand side is bounded because $\bar\theta _t  - \bar \theta _\infty$ converges to zero exponentially. Moreover, since $I + 2\bar L$ is positive definite, the above inequality implies that $\bar w_t \to \bar w_\infty = \bar \theta^* =  {\bf{1}}_N  \otimes \bar \theta _\infty $ as $t\to \infty$ from the Barbalat's lemma. Now, taking the limit $t\to\infty$ on both sides of the third equation in~\eqref{eq:9} leads to
\[
\lim_{t\to\infty} \frac{d}{dt}(\bar v_t-\bar v_\infty)=0
\]
implying that $\bar v_t$ converges to some constant $\bar v_\infty$, where we used the fact that $\bar w_t \to \bar w_\infty$ as $t\to \infty$. Finally, it remains to prove the convergence of $\bar \theta_t$. To this end, taking the limit $t\to \infty$ on both sides of the second equation in~\eqref{eq:dv/dt=Lw_t} leads to $0 = \bar\theta_\infty - \bar w_\infty - \bar L\bar w_\infty - \lim_{t\to\infty} \bar L\bar v_t$, which is equivalent to $\lim_{t \to \infty } \bar L\bar v_t  = \bar \theta _\infty - \bar w_\infty $ using $\bar L\bar w_\infty = 0$. This completes the proof.

\end{document}